% A Unified Approach to Discrete Quantum Gravity
% by Stan Gudder

% begin 2.24.14
% 2.24.14 submit abstract for IQSA 2014 along with registration on easy chair
% 3.13.14 resume typing
% finish 3.15.14
% correct 3.19.14
% .pdf to Jason 3.20.14
% 3.20.14 submit to arXiv general relativity and quantum cosmology gr-qc 
% submission identifier submit/#0938549
% 1403.5338 arXiv permanent identifier 3.24.14
% two pages with multiple \omega replaced by w
% 5.5.14 replace original submission w/ corrected version "unifiedappDQG5_14"

% Keywords: quantum processes, quantum gravity, causal sets??

% file "unifiedappDQG 5_14" from Stan Gudder
% prepared by Paula Gudder using LATeX
% on a Macintosh computer w/TeXShop
% contact us at sgudder@du.edu

%Style section
\documentclass[12pt,letterpaper]{article}
\usepackage{amsmath,amsfonts,amsthm,amssymb,stmaryrd,relsize,bm}

% Declaration section
\theoremstyle{plain}

\numberwithin{equation}{section}
\newtheorem{thm}{Theorem}[section]
\newtheorem{lem}[thm]{Lemma}

\newcommand{\complex}{{\mathbb C}}
\newcommand{\real}{{\mathbb R}}
\newcommand{\ascript}{{\mathcal A}}
\newcommand{\dscript}{{\mathcal D}}
\newcommand{\pscript}{{\mathcal P}}
\newcommand{\rscript}{{\mathcal R}}
\newcommand{\tscript}{{\mathcal T}}
\newcommand{\atilde}{\widetilde{a}}
\newcommand{\cbar}{\overline{c}}
\newcommand{\abar}{\overline{a}}
\newcommand{\pmathbf}{\mathbf{p}}
\newcommand{\qmathbf}{\mathbf{q}}
\newcommand{\sigmabm}{\bm{\sigma}}
\newcommand{\nablabm}{\bm{\nabla}}
\newcommand{\ctimes}{\mathrel{\mathlarger\cdot}}

\newcommand{\ab}[1]{\left|#1\right|}
\newcommand{\brac}[1]{\left\{#1\right\}}
\newcommand{\paren}[1]{\left(#1\right)}
\newcommand{\sqbrac}[1]{\left[#1\right]}
\newcommand{\elbows}[1]{{\left\langle#1\right\rangle}}
\newcommand{\floors}[1]{{\left\lfloor#1\right\rfloor}}
\newcommand{\ket}[1]{{\left|#1\right>}}

\errorcontextlines=0

\begin{document}

\title{A UNIFIED APPROACH TO\\DISCRETE QUANTUM GRAVITY
}
\author{S. Gudder\\ Department of Mathematics\\
University of Denver\\ Denver, Colorado 80208, U.S.A.\\
sgudder@du.edu
}
\date{}
\maketitle

\begin{abstract} %submitted for IQSA 2014
This paper is based on a covariant causal set (c-causet) approach to discrete quantum gravity. A c-causet is a partially ordered set $(x,<)$
that is invariant under labeling. We first consider the microscopic picture which describes the detailed structure of c-causets. The unique
labeling of a c-causet $x$ enables us to define a natural metric $d(a,b)$ between comparable vertices $a,b$ of $x$. The metric is then
employed to define geodesics and curvatures on $x$. We next consider the macroscopic picture which describes the growth process
$x \to y$ of c-causets. We propose that this process is governed by a quantum dynamics given by complex amplitudes. Denoting the
set of c-causets by $\pscript$ we show that the growth process $(\pscript ,\to )$ can be structured into a discrete 4-manifold.
This 4-manifold presents a unified approach to a discrete quantum gravity for which we define discrete analogues of Einstein's
field equations and Dirac's equation.
\end{abstract}

\section{Microscopic Picture}  % Section 1
In this article we continue our work on a covariant causal set approach to discrete quantum gravity \cite{hen09,sor03,sur11}. For background and more details, we refer the reader to \cite{gud13,gud14}. We call a finite partially ordered set a \textit{causet} and interpret the order $a<b$ in a causet $x$ to mean that $b$ is in the causal future of $a$. We denote the cardinality of a causet $x$ by $\ab{x}$. If
$x$ and $y$ are causets with $\ab{y}=\ab{x}+1$ then $x$ \textit{produces} $y$ (written $x\to y$) if $y$ is obtained from $x$ by adjoining a single maximal element $a$ to $x$. If $x\to y$ we call $y$ an \textit{offspring} of $x$.

A \textit{labeling} for a causet $x$ is a bijection $\ell\colon x\to\brac{1,2,\ldots ,\ab{x}}$ such that $a,b\in x$ with $a<b$ implies that $\ell (a)<\ell (b)$. If $\ell$ is labeling for $x$, we call $x=(x,\ell )$ an $\ell$-\textit{causet}. Two $\ell$-causets $x$ and $y$ are \textit{isomorphic} if there exists a bijection
$\phi\colon x\to y$ such that $a<b$ in $x$ if and only if $\phi (a)<\phi (b)$ in $y$ and $\ell\sqbrac{\phi (a)}=\ell (a)$ for every $a\in x$. Isomorphic $\ell$-causets are considered identical as $\ell$-causets. We say that a causet is \textit{covariant} if it has a unique labeling (up to $\ell$-causet isomorphism) and call a covariant causet a $c$-\textit{causet}. We denote the set of a $c$-causets with cardinality $n$ by $\pscript _n$ and the set of all $c$-causets by $\pscript$. It is easy to show that any $x\in\pscript$ with $\ab{x}>1$ has a unique producer and that any $x\in\pscript$ has precisely two offspring \cite{gud14}. It follows that $\ab{\pscript _n}=2^{n-1}$, $n=1,2,\ldots\,$.

Two elements $a,b\in x$ are \textit{comparable} if $a<b$ or $b<a$. We say that $a$ is a \textit{parent} of $b$ and $b$ is a \textit{child} of $a$ if $a<b$ and there is no $c\in x$ with $a<c<b$. A \textit{path from} $a$ \textit{to} $b$ in $x$ is a sequence $a_1=a$, $a_2,\ldots a_{n-1}$, $a_n=b$ where $a_i$ is a parent of $a_{i+1}$, $i=1,\ldots ,n-1$. The \textit{height} $h(a)$ of $a\in x$ is the cardinality minus one of the longest path in $x$ that ends with $a$. If there are no such paths, then $h(a)=0$ by convention. It is shown in \cite{gud14} that a causet $x$ is covariant if and only if $a,b\in x$ are comparable whenever $a$ and $b$ have different heights. Notice that in any
$c$-causet, two elements with the same height cannot be comparable. If $x\in\pscript$ we call the sets
\begin{equation*}
S_j(x)=\brac{a\in x\colon h(a)=j}, j=0,1,2,\ldots
\end{equation*}
\textit{shells} and the sequence of integers $s_j(x)=\ab{S_j(x)}$, $j=0,1,2,\ldots$, is the \textit{shell sequence}. A $c$-causet is uniquely determined by its shell sequence and we think of $\brac{s_j(x)}$ as describing the ``shape'' or geometry of $x$ \cite{gud13,gud14}.

The tree $(\pscript ,\to )$ can be thought of as a growth model and an $x\in\pscript _n$ is a possible universe at step (time) $n$. An instantaneous universe $x\in\pscript _n$ grows one element at a time in one of two ways. To be specific, if $x\in\pscript _n$ has shell sequence $\paren{s_0(x),s_1(x),\ldots ,s_m(x)}$, then $x$ will grow to one of its two offspring $x\to x_0$, $x\to x_1$ where $x_0$ and $x_1$ have shell sequences
\begin{align*}
&\paren{s_0(x),s_1(x),\ldots ,s_m(x)+1}\\
&\paren{s_0(x),s_1(x),\ldots ,s_m(x),1}\\
\end{align*}
respectively. In this way, we can recursively order the $c$-causets in $\pscript$ by using the notation $x_{n,j}$, $n=1,2,\ldots$,
$j=0,1,2,\ldots ,2^{n-1}-1$ where $n=\ab{x_{n,j}}$. For example, in terms of their shell sequences we have:
\begin{align*}
x_{1,0}&=(1),x_{2,0}=(2),x_{2,1}=(1,1)\\
x_{3,0}&=(3),x_{3,1}=(2,1),x_{3,2}=(1,2),x_{3,3}=(1,1,1)\\
x_{4,0}&=(4),x_{4,1}=(3,1),x_{4,2}=(2,2),x_{4,3}=(2,1,1),x_{4,4}=(1,3)\\
x_{4,5}&=(1,2,1),x_{4,6}=(1,1,2),x_{4,7}=(1,1,1,1)
\end{align*}
 
In general, the $c$-causet $x_{n,j}$ has the two offspring $x_{n,j}\to x_{n+1,2j}$ and $x_{n,j}\to x_{n+1,2j+1}$, $n=1,2,\ldots$,
$j=0,1,2,\ldots ,2^{n-1}-1$. For example, $x_{3,2}\to x_{4,4}$ and $x_{3,2}\to x_{4,5}$ while $x_{3,3}\to x_{4,6}$ and $x_{3,3}\to x_{4,7}$. Conversely, for $n=2,3,\ldots$, $x_{n,j}$ has the unique producer $x_{n-1,\floors{j/2}}$ where $\floors{j/2}$ is the integer part of $j/2$. For example, $x_{5,14}$ had the producer $x_{4,7}$ and $x_{5,13}$ has the producer $x_{4,6}$. With the previously notation $\pscript =\brac{x_{n,j}}$ in place, we call $\brac{\pscript ,\to}$ a \textit{sequential growth process} (SGP).
 
 In the microscopic picture, we view a $c$-causet $x$ as a framework or scaffolding for a possible universe. The vertices of $x$ represent small cells that can be empty or occupied by a particle. The shell sequence that determines $x$ gives the geometry of the framework. In order to describe the universe, we would like to find out how particles move and which vertices they are likely to occupy. We accomplish this by introducing a distance or metric on $x$.
 
Let $x=\brac{a_1,a_2,\ldots ,a_n}\in\pscript _n$, where the subscript $i$ of $a_i$ is the label of the vertex. We can think of a path
\begin{equation}         % equation (1.1)
\label{eq11}
\gamma =a_{i_1}a_{i_2}\cdots a_{i_m}
\end{equation}
as a sequence in $x$ starting with $a_{i_1}$ and moving along successive shells until $a_{i_m}$ is reached. We define the \textit{length} of $\gamma$ by
\begin{equation}         % equation (1.2)
\label{eq12}
\ell (\gamma )=\sqbrac{\sum _{j=2}^m(i_j-i_{j-1})^2}^{1/2}
\end{equation}
Of course, there are a variety of definitions that one can give for the length of a path, but this is one of the simplest nontrivial choices. We shall compare \eqref{eq12} with another possible choice later in order to illustrate it's advantages. For $a,b\in x$ with $a<b$, a \textit{geodesic} from $a$ to $b$ is a path from $a$ to $b$ that has the shortest length. Clearly, if $a<b$, then there is at least one geodesic from $a$ to $b$. If $a,b\in x$ are comparable and $a<b$ say, then the \textit{distance} $d(a,b)$ is the length of a geodesic from $a$ to $b$. The next result shows that the triangle inequality holds when applicable so $d(a,b)$ has the most important property of a metric. A \textit{subpath} of a path $a_{i_1}a_{i_2}\cdots a_{i_m}$ is a subset of $\brac{a_{i_1},a_{i_2},\ldots ,a_{i_m}}$ that is again a path. The next result also shows that once we have a geodesic we can take subpaths to form other geodesics.

\begin{thm}       % Theorem 1.1
\label{thm11}
{\rm{\cite{gud13}}}
{\rm{(i)}}\enspace If $a<c<b$, then $d(a,b)\le d(a,c)+d(c,b)$.
{\rm{(ii)}}\enspace A subpath of a geodesic is a geodesic.
\end{thm}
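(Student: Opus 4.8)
The plan is to build everything on one elementary observation: along any path $\gamma = a_{i_1}a_{i_2}\cdots a_{i_m}$ the \emph{squared} length $\ell(\gamma)^2 = \sum_{j=2}^m (i_j-i_{j-1})^2$ is additive under splitting at a vertex, since the sum defining it simply breaks into the sum over the steps before that vertex plus the sum over the steps after it. Combined with the fact that parent–child steps strictly increase height (if $a<b$ then a longest path ending at $a$, followed by a path from $a$ to $b$, is a strictly longer path ending at $b$, so $h(a)<h(b)$, and hence heights strictly increase along any path), this lets me freely concatenate and splice paths through a common vertex while keeping exact control of the length.

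For (i), I would choose a geodesic $\alpha$ from $a$ to $c$ and a geodesic $\beta$ from $c$ to $b$ and concatenate them at $c$ into a single path $\alpha\beta$ from $a$ to $b$; this is a legitimate path because both pieces consist of parent–child steps meeting at the shared vertex $c$. By additivity of squared length, $\ell(\alpha\beta)^2 = \ell(\alpha)^2 + \ell(\beta)^2 = d(a,c)^2 + d(c,b)^2$. Since $d(a,b)$ is the \emph{minimum} length over paths from $a$ to $b$, this gives the sharp estimate $d(a,b)\le\ell(\alpha\beta)=\sqrt{d(a,c)^2+d(c,b)^2}$, and the stated triangle inequality follows from $\sqrt{u^2+v^2}\le u+v$ for $u,v\ge 0$.

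For (ii), the first step is to pin down what a subpath of a geodesic $\gamma=a_{i_1}\cdots a_{i_m}$ can be. The vertices of $\gamma$ form a chain, so any subpath is totally ordered by $<$; if it used $a_{i_k}$ and $a_{i_{k'}}$ with $k'>k+1$ as consecutive vertices, then the intermediate $\gamma$-vertex $a_{i_{k+1}}$ would lie strictly between them in $x$, violating the parent–child requirement. Hence every subpath is a contiguous block $\gamma'=a_{i_r}a_{i_{r+1}}\cdots a_{i_s}$. Now I would argue by contradiction: if $\gamma'$ were not a geodesic from $a_{i_r}$ to $a_{i_s}$, there would be a strictly shorter path $\delta$ between them; splicing $\delta$ in place of $\gamma'$ (again a valid path, since the interior vertices of $\delta$ have heights strictly between those of $a_{i_r}$ and $a_{i_s}$ and so cannot collide with the retained prefix or suffix) yields a path $\gamma''$ from $a$ to $b$ with $\ell(\gamma'')^2 = \ell(\gamma)^2 - \ell(\gamma')^2 + \ell(\delta)^2 < \ell(\gamma)^2$, contradicting the minimality of $\gamma$.

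I expect the only real subtlety to be the contiguity observation in (ii): without it the cut-and-splice step is ill posed, since one must know that a subpath is exactly a stretch of the original geodesic before replacing it. Everything else is bookkeeping around the additivity of $\ell(\cdot)^2$; in particular (i) is essentially immediate once one notices that the concatenated path realizes $\sqrt{d(a,c)^2+d(c,b)^2}$ rather than $d(a,c)+d(c,b)$.
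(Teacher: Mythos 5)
Your proof is correct. There is an important caveat about comparison: the paper gives no proof of Theorem~\ref{thm11} at all --- the result is simply quoted from \cite{gud13} --- so there is no in-text argument to measure you against; your argument is the natural one that the citation presumably contains. For (i), concatenating a geodesic from $a$ to $c$ with one from $c$ to $b$ and using additivity of the squared length actually yields the sharper Pythagorean bound $d(a,b)\le\sqrt{d(a,c)^2+d(c,b)^2}$, from which the stated inequality follows via $\sqrt{u^2+v^2}\le u+v$; this is exactly right, and it is worth noting your bound is strictly stronger than the theorem as stated. For (ii), you correctly isolated the one genuinely nontrivial point: since the paper defines a subpath merely as a \emph{subset} of the vertex set that is again a path, the cut-and-splice step is meaningless until one shows such a subset is a contiguous block $a_{i_r}a_{i_{r+1}}\cdots a_{i_s}$ of the original geodesic, and your argument settles this --- if consecutive subpath vertices were $a_{i_k}$ and $a_{i_{k'}}$ with $k'>k+1$, then $a_{i_k}<a_{i_{k+1}}<a_{i_{k'}}$ would violate the parent--child requirement. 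The splice itself is legitimate for the reason you give: heights strictly increase along any path, so the interior vertices of the replacement path $\delta$ cannot collide with the retained prefix or suffix, the concatenation is again a path from $a$ to $b$, and additivity of $\ell(\cdot)^2$ gives $\ell(\gamma'')<\ell(\gamma)$, contradicting minimality of $\gamma$. In short: correct, complete, and with the right subtlety (contiguity) flagged as the crux.
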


As we shall see, there may be more than one geodesic from $a$ to $b$ when $a<b$. In the remainder of this section, we shall refer to a vertex by its label. If there are $j$ geodesics from vertex 1 to vertex $n$, we define the \textit{curvature} $K(n)$ at $n$ to be $K(n)=j-1$. One might argue that the curvature should be a local property and should not depend so heavily on vertex 1 which could be a considerable distance away. However, if there are a lot of geodesics from 1 to $n$, then by Theorem~\ref{thm11}(ii), there are also a lot of geodesics from other vertices to $n$. Thus, the definition of curvature is not so dependent on the initial vertex 1 as it first appears. Assuming that particles tend to move along geodesics, we see that $K(n)$ gives a measure of the tendency for vertex $n$ to be occupied.

We now give an example that appeared in \cite{gud13} and we refer the reader to that reference for details and other examples. Let $x$ be the $c$-causet with shell sequence (1,2,3,4,5,4,3,2,1). Separating shells by semicolons, the labels of the vertices become
\begin{equation*}
(1;2,3;4,5,6;7,8,9,10;11,12,13,14,15;16,17,18,19;20,21,22;23,24;25)
\end{equation*}
The $c$-causet $x$ represents a toy universe that expands uniformly and then contracts uniformly. The following table summarizes the curvatures for vertices of $x$.
\bigskip

\begin{tabular}{|c|c|c|c|c|c|c|c|c|c|c|c|c|c|c|}
\hline
$i$&1&2&3&4&5&6&7&8&9&10&11&12&13&14\\
\hline
$K(i)$&$-1$&0&0&1&0&0&0&1&0&0&2&0&1&0\\
\hline\noalign{\bigskip}
\end{tabular}

\begin{tabular}{|c|c|c|c|c|c|c|c|c|c|c|c|c|}
\hline
$i$&15&16&17&18&19&20&21&22&23&24&25\\
\hline
$K(i)$&$0$&2&2&0&1&5&3&1&5&9&5\\
\hline\noalign{\medskip}
\multicolumn{12}{c}%
{\textbf{Table 1}}\\
\end{tabular}

For illustrative purposes, we now give an example of a different distance function. For a path $\gamma$ given by \eqref{eq11} define the \textit{length} $\ell _1$ by
\begin{equation}         % equation (1.3)
\label{eq13}
\ell _1(\gamma )=\max\brac{\ab{i_j-i_{j-1}}\colon j=2,\ldots ,m}
\end{equation}
We now define 1-\textit{geodesics}, \textit{distance} $d_1(a,b)$ and \textit{curvature} $K_1$ as before with $\ell (\gamma )$ replaced by $\ell _1(\gamma )$. Although $d_1$ satisfies the triangle inequality of Theorem~\ref{thm11}(i), it gives a rather course distance measure compared to $d$. One way of seeing this is that Theorem~\ref{thm11}(ii) does not hold for $d_1$ as the next example shows.

Let $x$ be the $c$-causet with shell sequence (1,2,3,4) and vertex labels (1;2,3;4,5,6;7,8,9,10). The following two tables summarize the distances and curvatures for the vertices of $x$.
\bigskip

{\hskip 6pc
\begin{tabular}{|c|c|c|c|c|c|c|c|c|c|}
\hline
$i$&2&3&4&5&6&7&8&9&10\\
\hline
$d_1(1,i)$&1&2&2&2&3&2&3&3&4\\
\hline\noalign{\medskip}
\multicolumn{10}{c}%
{\textbf{Table 2}}\\
\end{tabular}}
\bigskip

{\hskip 6pc
\begin{tabular}{|c|c|c|c|c|c|c|c|c|c|c|}
\hline
$i$&1&2&3&4&5&6&7&8&9&10\\
\hline
$K_1(i)$&$-1$&0&0&1&0&0&0&2&0&1\\
\hline\noalign{\medskip}
\multicolumn{11}{c}%
{\textbf{Table 3}}\\
\end{tabular}}
\bigskip

\noindent Notice that the path 1-2-5-8 is a 1-geodesic, but the path 1-2-5 is not which shows that Theorem~\ref{thm11}(ii) does not hold for $d_1$. Another example is that 1-2-6-10 is a 1-geodesic, but 1-2-6 is not.

\section{Macroscopic Picture} % Section 2
In general relativity theory it is postulated that the mass-energy distribution determines the curvature, while in our microscopic picture we assume that it is the other way around. That is, the curvature determines the mass distribution and the curvature is given by the geometry (shell sequence). We are now confronted with the question: What determines the shell sequence of our particular universe? To study this question, the present section studies the macroscopic picture. This picture describes the evolution of a universe as a quantum sequential growth process. In such a process, the probabilities of competing evolutions are determined by quantum amplitudes. Moreover, we shall see the emergence of a discrete 4-manifold.

In \cite{gud13} we gave a method for constructing a discrete 4-manifold from the SGP $(\pscript ,\to)$. This method was based on forming
``twin $c$-causets.'' In this article we find it convenient to employ a related but different method based on incident edge pairs. If $x\to y$ we call $xy$ an \textit{edge} in $\pscript$. Two edges of the form $e=xy$ and $f=yz$ are said to be \textit{incident} and we call the pair $(e,f)$ of incident edges a \textit{direction}. The direction $(e,f)$ can also be described by the vertices $(x,y,z)$ where $x\to y\to z$. Starting at any $x_{n,j}\in\pscript$ we have the four directions given by
\begin{align*}
d_{n,j}^1&=(x_{n,j},x_{n+1,2j},x_{n+2,4j})\\
d_{n,j}^2&=(x_{n,j},x_{n+1,2j},x_{n+2,4j+1})\\
d_{n,j}^3&=(x_{n,j},x_{n+1,2j+1},x_{n+2,4j+2})\\
d_{n,j}^4&=(x_{n,j},x_{n+1,2j+1},x_{n+2,4j+3})\\
\end{align*}

We denote the set of directions by
\begin{equation*}
D=\brac{d_{n,j}^k\colon n=1,2,\ldots ,j=0,1,\ldots ,2^{n-1}-1,k=1,2,3,4}
\end{equation*}
Two directions $(e,f)$, $(e_1,f_1)$ are \textit{incident} if $f$ and $e_1$ are incident. A \textit{direction path} $\omega$ in $\pscript$ is a
sequence of successively incident directions and a \textit{direction} $n$-\textit{path} in $\pscript$ is a finite sequence of $n$ successively incident directions beginning at one of the \textit{initial vertices} $x_{1,0}$, $x_{2,0}$ or $x_{2,1}$. Specifically, a direction $n$-path has one of the forms
\begin{align}             
\label{eq21}      % equation (2.1)
d_{1,0}^{k_1}d_{3,j_3}^{k_3}&\cdots d_{2n-1,j_{2n-1}}^{k_{2n-1}}\\
\label{eq22}      % equation (2.2)
d_{2,0}^{k_2}d_{4,j_4}^{k_3}&\cdots d_{2n,j_{2n}}^{k_{2n}}\\
\label{eq23}      % equation (2.3)
d_{2,1}^{k_2}d_{4,j_4}^{k_4}&\cdots d_{2n,j_{2n}}^{k_{2n}}
\end{align}
where $k_i\in\brac{1,2,3,4}$ and each direction is incident to the direction that follows it. We say that \eqref{eq21}, \eqref{eq22}, \eqref{eq23} have \textit{initial vertices} $x_{1,0},x_{2,0},x_{2,1}$, respectively and have \textit{final vertices} given by the last vertices of their directions, respectively. For example, the unique direction 2-path with initial vertex $x_{1,0}$ and final vertex $x_{5,7}$ is
$d_{1,0}^2d_{3,1}^4$. Notice that every \textit{odd vertex} $x_{2n-1,j_{2n-1}}$ is the final vertex of a unique direction $n$-path with initial vertex $x_{1,0}$ and every \textit{even vertex} $x_{2n,j_{2n}}$ is the final vertex of a unique direction $n$-path with initial vertex $x_{2,0}$ or $x_{2,1}$. A direction path is like one of the direction $n$-paths \eqref{eq21}, \eqref{eq22}, \eqref{eq23} except it does not terminate.

We denote the set of direction paths in $\pscript$ by $\Omega$ and the set of direction $n$-paths by $\Omega _n$. We interpret a direction path as a completed universe or history of an evolved universe. A direction path or $n$-path $\omega$ \textit{contains} $x\in\pscript$ if there is a direction of
$\omega$ that has the form $(x,y,z)$. We then write $x\in\omega$.

A \textit{transition amplitude} is a map $\atilde\colon D\to\complex$ satisfying $\sum _{k=1}^4\atilde (d_{n,j}^k)=1$ for all $n,j$. Corresponding to
$\atilde$ we define the \textit{amplitude} of a direction $n$-path $\omega =\omega _1\omega _2\cdots\omega _n\in\Omega _n$ to be
$a(\omega )=\atilde (\omega _1)\atilde (\omega _2)\cdots\atilde (\omega _n)$ if the initial vertex of $\omega$ is $x_{1,0}$ and
$a(\omega )=(1/2)\atilde (\omega _1)\atilde (\omega _2)\cdots\atilde (\omega _n)$ if the initial vertex of $\omega$ is $x_{2,0}$ or $x_{2,1}$.
The \textit{amplitude} of a set $A\subseteq\Omega _n$ is $a(Q)=\sum\brac{a(\omega )\colon\omega\in A}$. Notice that $a(\Omega _n)=1$. The \textit{amplitude} $a(x_{n,j})$ for $x_{n,j}\in\pscript _n$, $n\ge 3$ is $a(\omega )$ where $\omega$ is the unique finite direction path that has
$x_{n,j}$ as its final vertex. It follows that $\sum\brac{a(x)\colon x\in\pscript _n}=1$. We call $c_{n,j}^k=\atilde (d_{n,j}^k)$
\textit{coupling constants} and note that $\sum _{k=1}^4c_{n,j}^k=1$ for all $n,j$. We define $a(x_{n,0})=1$, $a(x_{2,0})=a(x_{2,1})=1/2$.

The $q$-\textit{measure} of a set $A\subseteq\Omega _n$ corresponding to $\atilde$ is defined by $\mu _n(A)=\ab{a(A)}^2$ \cite{sor94}. In particular, $\mu _n(\omega )=\ab{a(\omega )}^2$, $\omega\in\Omega _n$ and $\mu _n(x_{n,j})=\ab{a(x_{n,j})}^2$. Letting
$\ascript _n=2^{\Omega _n}$ be the power set on $\Omega _n$ we have that $(\Omega _n,\ascript _n)$ is a measurable space and we call $(\Omega _n,\ascript _n,\mu _n)$ a $q$-\textit{measure space}. We interpret $\mu _n(A)$ as the quantum propensity of the event
$A\in\ascript _n$. It is believed that once the coupling constants and hence the $q$-measures $\mu _n$ are known, then certain direction $n$-paths and $c$-causets $x_{n,j}$ will have dominate propensities. In this way we will determine dominate geometries for the microscopic picture of our particular universe. Because of quantum interference the $q$-measure $\mu _n$ is not a measure on the $\sigma$-algebra $\ascript _n$, in general. This is because the additivity condition $\mu _n(A\cup B)=\mu _n(A)+\mu _n(B)$ whenever $A\cap B=\emptyset$ need not hold. Of course, we do have that $\mu _n(\Omega _n)=1$ and $\mu _n(A)\ge 0$ for all $A\in\ascript$.

We refer the reader to \cite{gud13} for an example of a transition amplitude $\atilde\colon D\to\complex$ that may have physical significance. In this situation, the $c$-causets and direction paths with highest propensity lie toward the ``middle'' of the process $(\pscript,\to )$. For example, the
$c$-causets of highest propensity are those $x_{n,j}$ with $j=2^{n-2}$ and the propensities decrease to zero for large $n$ as $j$ gets smaller and larger than $2^{n-2}$

\section{Covariant Difference Operators} % Section 3
Viewing the directions $d_{n,j}^k$, $k=1,2,3,4$ as ``tangent vectors'' at the vertex $x_{n,j}$ we see the emergence of a discrete 4-manifold. We now carry this analogy further by defining covariant difference operators.

Let $H=L_2(\pscript )$ be the Hilbert space of square summable complex-valued functions on $\pscript$ with the usual inner product
\begin{equation*}
\elbows{f,g}=\sum _{x\in\pscript}\overline{f(x)}g(x)
\end{equation*}
We define the \textit{covariant difference operators} $\nabla _k$, $k=1,2,3,4$ by
\begin{equation}         % equation (3.1)
\label{eq31}
\nabla _kf(x_{n,j})=f(x_{n+2,4j+k-1})-c_{n+2,4j+k-1}^kf(x_{n,j})
\end{equation}
The operator $\nabla _k$ can be considered to be the difference operator in the direction $k$, $k=1,2,3,4$. This is a slight variation of the difference operators considered in \cite{gud12,gud13}.

For $p=(p_1,p_2,p_3,p_4)\in\real ^4$, define the function $w\colon\real ^4\times\pscript\to\complex$ recursively by
\begin{equation}         % equation (3.2)
\label{eq32}
w(p,x_{n+2,4j+k-1})=
\begin{cases}(c_{n+2,4j+k-1}^k+ip_k)w(p,x_{n,j})&\text{if\ $k=1,2,3$}\\
  (c_{n+2,4j+k-1}^k-ip_k)w(p,x_{n,j})&\text{if\ $k=4$}
\end{cases}
\end{equation}
where $i=\sqrt{-1}$. The values $w(p,x_{1,0})$, $w(p,x_{2,0})$, $w(p,x_{2,1})$ are arbitrary and are the \textit{initial conditions}. For fixed $p$, $w(p,x_{n,j})$ corresponds to a \textit{discrete plane wave} in the ``direction'' $p$. In general, $w(p,\ctimes )\notin H$ except for certain values of $p$ depending on the coupling constants.
\medskip

\noindent\textbf{Example.}\enspace If we define the initial conditions
\begin{equation*}
w(p,x_{1,0})=w(p,x_{2,0})=w(p,x_{2,1})=1
\end{equation*}
we have that
\begin{align*}
w(p,x_{3,j})&=c_{3,j}^{j+1}+ip_{j+1},\ j=0,1,2\\
w(p,x_{4,j})&=c_{4,j}^{j\!\!\!\!\pmod{4}+1}+ip_{j\!\!\!\!\pmod{4}+1},\ j=0,1,2,4,5,6\\  % \pmod leaves space before use \! to move in
w(p,x_{5,j})&=(c_{5,j}^{j+1}+ip_1)(c_{3,0}^1+ip_1),\ j=0,1,2\\
w(p,x_{5,j})&=c_{5,j}^{j\!\!\!\!\pmod{4}+1}+ip_{j\!\!\!\!\pmod{4}+1}(c_{3,1}^2+ip_2),\ j=4,5,6\\
w(p,x_{5,j})&=c_{4,j}^{j\!\!\!\!\pmod{4}+1}+ip_{j\!\!\!\!\pmod{4}+1}(c_{3,2}^3+ip_3),\ j=8,9,10\\
w(p,x_{5,j})&=c_{4,j}^{j\!\!\!\!\pmod{4}+1}+ip_{j\!\!\!\!\pmod{4}+1}(c_{3,3}^4+ip_4),\ j=12,13,14\\
\end{align*}
We have omitted the values of $w(p,x_{n,j})$ for $j=3\pmod{4}$. These are given by
\begin{align*}
w(p,x_{3,3})&=c_{3,3}^4-ip_4\\
w(p,x_{4,3})&=c_{34,3}^4-ip_4\\
w(p,x_{4,7})&=c_{4,7}^4-ip_4\\
w(p,x_{5,3})&=(c_{5,3}^4-ip_4)(c_{3,0}^1+ip_2)\\
w(p,x_{5,7})&=(c_{5,7}^4-ip_4)(c_{3,1}^2+ip_2)\\
w(p,x_{5,11})&=(c_{5,11}^4-ip_4)(c_{3,2}^3+ip_3)\\
w(p,x_{5,15})&=(c_{5,15}^4-ip_4)(c_{3,3}^4+ip_4)\\
\end{align*}
\medskip

In general, $\nabla _k$ is an unbounded operator and we denote its domain by $\dscript (\nabla _k)$ $k=1,2,3,4$. The next result shows that if $\nabla _kw(p,\ctimes )$ is defined, then $w(p,\ctimes )$ is a simultaneous eigenvector of $\nabla _k$, $k=1,2,3,4$.

\begin{lem}       % Lemma 3.1
\label{thm31}
If $w(p,\ctimes )\in\bigcap\limits _{k=1}^4\dscript (\nabla _k)$, then
\begin{equation*}
\nabla _kw(p,x_{n,j})=
\begin{cases}ip_kw(p,x_{n,j})&\text{if\ $k=1,2,3$}\\
  -ip_kw(p,x_{n,j})&\text{if\ $k=4$}
\end{cases}
\end{equation*}
\end{lem}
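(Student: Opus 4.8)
The plan is to unwind the two definitions directly. I would apply the operator $\nabla_k$ to $w(p,\ctimes)$ at an arbitrary vertex $x_{n,j}$ using \eqref{eq31}, and then substitute the recursive formula \eqref{eq32} for the forward value $w(p,x_{n+2,4j+k-1})$. By \eqref{eq31},
\begin{equation*}
\nabla_k w(p,x_{n,j}) = w(p,x_{n+2,4j+k-1}) - c_{n+2,4j+k-1}^k\, w(p,x_{n,j}),
\end{equation*}
so the whole content of the lemma is the observation that the coupling-constant coefficient appearing in \eqref{eq32} is precisely the quantity subtracted in \eqref{eq31}, which forces a cancellation.

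Next I would split into the two cases dictated by \eqref{eq32}. For $k=1,2,3$, substituting $w(p,x_{n+2,4j+k-1}) = \paren{c_{n+2,4j+k-1}^k + ip_k}\, w(p,x_{n,j})$ and cancelling the term $c_{n+2,4j+k-1}^k\, w(p,x_{n,j})$ leaves $\nabla_k w(p,x_{n,j}) = ip_k\, w(p,x_{n,j})$. For $k=4$, where the forward index is $4j+3$, the identical substitution with the minus sign gives $\nabla_4 w(p,x_{n,j}) = -ip_4\, w(p,x_{n,j})$. Since $x_{n,j}$ was arbitrary, both cases of the stated eigenvalue equation follow at once.

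The only conceptual point---rather than a genuine obstacle---is the role of the hypothesis $w(p,\ctimes) \in \bigcap_{k=1}^4 \dscript(\nabla_k)$. Because each $\nabla_k$ is unbounded, the expression $\nabla_k w(p,\ctimes)$ is a priori only formal; the hypothesis guarantees that the resulting function lies in $H = L_2(\pscript)$, so that $w(p,\ctimes)$ is a bona fide simultaneous eigenvector rather than merely satisfying the identity pointwise. The algebraic identity itself, however, holds at every vertex irrespective of summability, so no estimate or limiting argument is required. I expect the proof to be a pure cancellation of one line per case, with the domain hypothesis invoked only to make the eigenvector statement meaningful.
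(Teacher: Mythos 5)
Your proposal is correct and is essentially identical to the paper's own proof: Gudder likewise applies \eqref{eq31}, substitutes \eqref{eq32} for $w(p,x_{n+2,4j+k-1})$, and cancels the $c_{n+2,4j+k-1}^k\,w(p,x_{n,j})$ term in each of the two sign cases. Your closing observation that the domain hypothesis serves only to make the eigenvector statement meaningful in $H$, while the cancellation holds pointwise at every vertex, is a fair gloss on what the paper leaves implicit.
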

\begin{proof}
By \eqref{eq31} and \eqref{eq32} we have
\begin{align*}
\nabla _kw(p,x_{n,j})&=w(p,x_{n+2,4j+k-1})-c_{n+2,4j+k-1}^kw(p,x_{n,j})\\
  &=(c_{n+2,4j+k-1}^k\pm ip_k)-c_{n+2,4j+k-1}w(p,x_{n,j})\\
&=\begin{cases}ip_kw(p,x_{n,j})&\text{if\ $k=1,2,3$}\\
  -ip_kw(p,x_{n,j})&\text{if\ $k=4$}\hskip 12pc\qedhere
\end{cases}
\end{align*}
\end{proof}

If $\omega =\omega _1\omega _2\cdots\in\Omega$ and $x_{n,j}\in\omega$, then $\omega _m=(x_{n,j},y,z)$ for some integer $m$. Now for some $m',j',k'$ we have $\omega _m=d_{m',j'}^{k'}$, and we write $k(\omega ,x_{n,j})=k'$. We think of $k(\omega ,x_{n,j})$ as specifying the direction of $\omega$ at the vertex $x_{n,j}$. Corresponding to the amplitude $a$ and an arbitrary $x_{n,j}\in\pscript$ we define
\begin{equation*}
a_\omega (x_{n,j})=
\begin{cases}a (x_{n,j})&\text{if\ $x_{n,j}\in\omega$}\\
  0&\text{if\ $x_{n,j}\notin\omega$}
\end{cases}
\end{equation*}
The $\omega$-\textit{covariant difference operator} is the operator $\nabla _\omega$ on $H$ given by
\begin{equation}         % equation (3.3)
\label{eq33}
\nabla _\omega f(x_{n,j})=a_\omega (x_{n,j})\nabla _{k(\omega ,x_{n,j})}f(x_{n,j})
\end{equation}
We then have by \eqref{eq31} that
\begin{align}         % equation (3.4)
\label{eq34}
\nabla _\omega f(x_{n,j})&=a_\omega (x_{n,j})f(x_{n+2,4j+k(\omega ,x_{n,j})-1})\notag\\
&\quad -a_\omega (x_{n+2,4j+k(\omega ,x_{n,j})-1})f(x_{n,j})
\end{align}
If $a_\omega\in H$, then it follows from \eqref{eq34} that $\nabla _\omega a_\omega =0$ which is why $\nabla _k$ and $\nabla _\omega$ are called covariant.

We now extend this formalism to functions of two variables. Let $K=H\otimes H$ which we can identify with $L_2(\pscript\times\pscript )$. For $k,k'=1,2,3,4$, we define the \textit{covariant bidifference operators} $\nabla _{k,k'}$ with domains in $K$ by
\begin{align*}
\nabla _{k,k'} f(x_{n,j},x_{n'j'})&=f(x_{n+2,4j+k-1},x_{n'+2,4j'+k'-1})\\
&\quad -\cbar _{n+2,4j+k-1}^kc_{n'+2,4j'+k'-1}^{k'}f(x_{n,j},x_{n',j'})
\end{align*}
For $\omega ,\omega '\in\Omega$, the $\omega\omega '$-\textit{covariant bidifference operator} is given by
\begin{equation*}
\nabla _{\omega ,\omega '}f(x_{n,j},x_{n'j'})=\abar _\omega (x_{n,j})a_{\omega '}(x_{n',j'})
  \nabla _{k(\omega ,x_{n,j}),k(\omega ',x_{n',j'})}f(x_{n,j},x_{n'j'})
\end{equation*}
Again, $\nabla _{\omega,\omega '}$ is called covariant because we have $\nabla _{\omega ,\omega '}\abar _\omega a_{\omega '}=0$

\section{Discrete Dirac and Einstein Equations} % Section 4
This section shows that we can employ the covariant difference operators presented in Section~3 to construct discrete analogues of Dirac's and Einstein's equations. Letting $c$ be the speed of light in a vacuum, we use units in which $c=\hbar=1$. In $\real ^4$ we employ the indefinite inner product
\begin{equation*}
p\cdot q=-p_1q_1-p_2q_2-p_3q_3+p_4q_4
\end{equation*}
and we let $\sigma _k$, $k=1,2,3,4$, be the Pauli matrices
\begin{equation*}
\sigma _1=\begin{bmatrix}0&1\\1&0\end{bmatrix},\quad
\sigma _2=\begin{bmatrix}0&-i\\i&0\end{bmatrix},\quad
\sigma _3=\begin{bmatrix}1&0\\0&-1\end{bmatrix},\quad
\sigma _4=\begin{bmatrix}1&0\\0&1\end{bmatrix},\quad
\end{equation*}
Defining $\sigma =(\sigma _1,\sigma _2,\sigma _3,\sigma _4)$ and $\nabla =(\nabla _1,\nabla _2,\nabla _3,\nabla _4)$ we have that
\begin{equation*}
\sigma\cdot\nabla =-\sigma _1\nabla _1-\sigma _2\nabla _2-\sigma _3\nabla _3+\sigma _4\nabla _4
\end{equation*}
The \textit{discrete Weyl equation} is defined by
\begin{equation}         % equation (4.1)
\label{eq41}
\sigma\cdot\nabla \phi (x_{n,j})=0
\end{equation}
where $\phi\colon\pscript\to\complex ^2$ is a two-component function $\phi =(\phi _1,\phi _2)$. Writing \eqref{eq41} in full gives
\begin{equation}         % equation (4.2)
\label{eq42}
\begin{bmatrix}\noalign{\smallskip}
-\nabla _3+\nabla _4&-\nabla _1+i\nabla _2\\-\nabla _1-i\nabla _2&\nabla _3+\nabla _4\\\noalign{\smallskip}\end{bmatrix}\quad
\begin{bmatrix}\noalign{\smallskip}\phi _1(x_{n,j})\\\phi _2(x_{n,j})\\\noalign{\smallskip}\end{bmatrix}=0
\end{equation}

For $p\in\real ^4$, let $u(p)$ be a two-component function $u\colon\real ^4\to\complex ^2$ satisfying
\begin{equation}         % equation (4.3)
\label{eq43}
\sigma\cdot p\ u(p)=0
\end{equation}
Writing \eqref{eq43} in full gives
\begin{equation}         % equation (4.4)
\label{eq44}
\begin{bmatrix}\noalign{\smallskip}
-p_3+p_4&-p _1+ip _2\\-p _1-ip _2&p _3+p _4\\\noalign{\smallskip}\end{bmatrix}\quad
\begin{bmatrix}\noalign{\smallskip}u_1(p)\\ u_2(p)\\\noalign{\smallskip}\end{bmatrix}=0
\end{equation}
If there is a nonzero solution of \eqref{eq43}, the determinate of the matrix in \eqref{eq44} must be zero. Hence,
\begin{equation}         % equation (4.5)
\label{eq45}
p\cdot p=-p_1^2-p_2^2-p_3^2+p_4^2=0
\end{equation}
Moreover, solutions of \eqref{eq43} have the form $u_1(p)=1$
\begin{equation*}
u_2(p)=(p_3-p_4)/(-p_1+ip_2)
\end{equation*}

\begin{lem}       % Lemma 4.1
\label{thm41}
Discrete plane wave solutions of \eqref{eq41} have the form
\begin{equation}         % equation (4.6)
\label{eq46}
\phi (x_{n,j})=u(p)w(p,x_{n,j})
\end{equation}
when $p$ satisfies \eqref{eq45} and $w(p,\ctimes)\in\bigcap\limits _{k=1}^4\dscript (\nabla _k)$.
\end{lem}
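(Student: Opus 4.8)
The plan is to verify directly that the proposed $\phi(x_{n,j}) = u(p)w(p,x_{n,j})$ satisfies the discrete Weyl equation \eqref{eq41} by substituting it into the explicit matrix form \eqref{eq42} and using Lemma~\ref{thm31} to replace each operator $\nabla_k$ by its eigenvalue. The key observation is that $\phi$ factors as a fixed two-component vector $u(p)$ times the scalar discrete plane wave $w(p,\ctimes)$, so each $\nabla_k$ acts only on the scalar factor. Since $w(p,\ctimes) \in \bigcap_{k=1}^4 \dscript(\nabla_k)$ by hypothesis, Lemma~\ref{thm31} applies and gives $\nabla_k w(p,x_{n,j}) = ip_k w(p,x_{n,j})$ for $k=1,2,3$ and $\nabla_4 w(p,x_{n,j}) = -ip_4 w(p,x_{n,j})$.

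First I would compute $\sigma \cdot \nabla$ applied to $\phi$. Writing $\phi(x_{n,j}) = u(p)w(p,x_{n,j})$ componentwise and applying the matrix of difference operators in \eqref{eq42}, each entry produces a combination of terms $\nabla_k w(p,x_{n,j})$ multiplied by the appropriate component $u_1(p)$ or $u_2(p)$. Using Lemma~\ref{thm31}, every $\nabla_k w$ becomes $\pm ip_k w(p,x_{n,j})$, so the scalar factor $w(p,x_{n,j})$ pulls out of the whole expression. What remains inside the matrix is exactly the matrix of \eqref{eq44} with entries $-p_3+p_4$, $-p_1+ip_2$, etc., now multiplying the eigenvalue factors; a short bookkeeping check of the signs shows that $\sigma \cdot \nabla \phi(x_{n,j}) = (\sigma \cdot p)\, u(p)\, w(p,x_{n,j})$ up to the overall factor of $i$ that the eigenvalues supply.

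At that point the result follows immediately: since $u(p)$ is chosen to satisfy \eqref{eq43}, namely $\sigma \cdot p\, u(p) = 0$, the entire right-hand side vanishes, giving $\sigma \cdot \nabla \phi(x_{n,j}) = 0$, which is \eqref{eq41}. The hypothesis $p\cdot p = 0$ from \eqref{eq45} is precisely the condition guaranteeing that a nonzero $u(p)$ solving \eqref{eq43} exists, so the plane wave solution is nontrivial.

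The main obstacle I anticipate is purely one of sign bookkeeping: the operator $\nabla_4$ carries eigenvalue $-ip_4$ rather than $+ip_4$, and the indefinite inner product $\sigma \cdot \nabla = -\sigma_1\nabla_1 - \sigma_2\nabla_2 - \sigma_3\nabla_3 + \sigma_4\nabla_4$ mixes its own minus signs with those. I would need to confirm that the sign flip on the fourth eigenvalue is exactly compensated by the plus sign in front of $\sigma_4\nabla_4$ (and the corresponding $+p_4$ in $\sigma \cdot p$), so that applying $\nabla$ to $w$ reproduces $\sigma \cdot p$ acting on $u(p)$ cleanly and not some sign-corrupted variant. Once these signs are checked to align across the four terms, the proof reduces to the factorization argument described above and is otherwise a routine substitution.
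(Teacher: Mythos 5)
Your plan is exactly the route the paper's proof takes: substitute \eqref{eq46} into \eqref{eq42}, let each $\nabla_k$ act on the scalar factor via Lemma~\ref{thm31}, pull $w(p,x_{n,j})$ out, and invoke \eqref{eq43}. But the one step you defer --- the claim that the sign flip on $\nabla_4$'s eigenvalue is ``exactly compensated'' by the plus sign in front of $\sigma_4\nabla_4$ --- is precisely where the argument does not close, and you were right to be suspicious. Carried out strictly with the eigenvalues of Lemma~\ref{thm31}, the signs \emph{reinforce} rather than cancel: the terms $-\sigma_k\nabla_k$ ($k=1,2,3$) contribute $-i\sigma_kp_k$, and the term $+\sigma_4\nabla_4$ contributes $+\sigma_4(-ip_4)=-i\sigma_4p_4$, so
\[
\sigma\cdot\nabla\,u(p)w(p,x_{n,j})
=-i\paren{\sigmabm\cdot\pmathbf+p_4\sigma_4}u(p)\,w(p,x_{n,j})
=-i\begin{bmatrix}p_3+p_4&p_1-ip_2\\p_1+ip_2&-p_3+p_4\end{bmatrix}u(p)\,w(p,x_{n,j}),
\]
which is \emph{not} a scalar multiple of the matrix in \eqref{eq44}. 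This matrix is singular exactly when \eqref{eq45} holds, so the dispersion relation is unchanged, but its kernel is spanned by $u_1=1$, $u_2=-(p_3+p_4)/(p_1-ip_2)$ --- the solution of \eqref{eq43} for $(-p_1,-p_2,-p_3,p_4)$ --- not by the $u(p)$ with $u_2(p)=(p_3-p_4)/(-p_1+ip_2)$ displayed after \eqref{eq45}. So with the stated conventions, $\phi=u(p)w(p,\ctimes)$ as defined does not annihilate $\sigma\cdot\nabla$, and the ``short bookkeeping check'' you postpone is a genuine obstruction, not a routine verification.

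For comparison, the paper's printed proof follows your outline but quietly drops every factor of $\pm i$: in its middle display the matrix entries appear as $-p_3+p_4$, $-p_1+ip_2$, etc., which amounts to using $\nabla_kw=p_kw$ for all four $k$, contradicting Lemma~\ref{thm31}; only with that silent substitution does the chain $=w\,\sigma\cdot p\,u(p)=0$ go through. To make the lemma literally true one must repair a convention --- for instance take $u(p)$ in the kernel of $p_4\sigma_4+\sigmabm\cdot\pmathbf$ (i.e., replace $\pmathbf$ by $-\pmathbf$ in \eqref{eq43}), or change the sign in \eqref{eq32} so that all four eigenvalues in Lemma~\ref{thm31} are $+ip_k$. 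In short: you reproduced the paper's approach faithfully, but the sign alignment you flagged as the ``main obstacle'' in fact fails, both in your sketch and, tacitly, in the paper itself.
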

\begin{proof}
Letting $\phi (x_{n,j})$ be defined by \eqref{eq46} we have by \eqref{eq43} that
\begin{align*}
\sigma\cdot\nabla\phi (x_{n,j})&=\sigma\cdot\nabla u(p)w(p_,x_{n,j})\\
&=\begin{bmatrix}\noalign{\smallskip}
-\nabla _3+\nabla _4&-\nabla _1+i\nabla _2\\-\nabla _1-i\nabla _2&\nabla _3+\nabla _4\\\noalign{\smallskip}\end{bmatrix}\quad
\begin{bmatrix}\noalign{\smallskip}
u_1(p)w(p,x_{n,j})\\u_2(p)w(p,x_{n,j})\\\noalign{\smallskip}\end{bmatrix}\\
\noalign{\bigskip}
&=\begin{bmatrix}\noalign{\smallskip}
(-p_3+p_4)w(p,x_{n,j})u_1(p)+(-p_1+ip_2)w(p,x_{n,j})u_2(p)\\
(-p_1-ip_2)w(p,x_{n,j})u_1(p)+(p_3+p_4)w(p,x_{n,j})u_2(p)\\\noalign{\smallskip}\end{bmatrix}\\
\noalign{\smallskip}
&=w(p,x_{n,j})\sigma\cdot p u(p)=0\hskip 12pc\qedhere
\end{align*}
\end{proof}

Define the $4\times 4$ \textit{gamma matrices} by
\begin{equation*}
\gamma _4=\begin{bmatrix}\sigma _4&0\\0&-\sigma _4\end{bmatrix},\quad
\gamma _k=\begin{bmatrix}0&-\sigma _k\\\sigma _k&0\end{bmatrix},\quad k=1,2,3
\end{equation*}
We define the \textit{discrete free Dirac equation} by
\begin{equation}         % equation (4.7)
\label{eq47}
(i\gamma\cdot\nabla -m)\phi (x_{n,j})=0
\end{equation}
where $m>0$ and $\phi\colon\pscript\to\complex ^4$ is a four-component function $\phi =(\phi _1,\phi _2,\phi _3,\phi _4)$ called a
\textit{spinor}. Notice that when $m=0$, \eqref{eq47} essentially reduces to \eqref{eq41}. Equation~\eqref{eq47} says that $\phi$ is an eigenvector of the \textit{discrete free Dirac operator} $i\gamma\cdot\nabla$ with eigenvalue $m$. We now proceed in the usual way to find solutions of \eqref{eq47}. Although the method is standard, for completeness we give some details.

For $\pmathbf ,\qmathbf\in\real ^3$ we use the usual inner product $\pmathbf\cdot\qmathbf =p_1q_1+p_2q_2+p_3q_3$ and norm
$\|\pmathbf\|=(\pmathbf\cdot\pmathbf )^{1/2}$. For $\pmathbf =(p_1,p_2,p_3)\in\real ^3$ define $p_4=\sqrt{\|\pmathbf\|^2+m^2\,}$ and let 
$\ket{0}$, $\ket{1}$ be the qubits
\begin{equation*}
\ket{0}=\begin{bmatrix}1\\0\end{bmatrix},\quad
\ket{1}=\begin{bmatrix}0\\1\end{bmatrix}
\end{equation*}
For $s=0,1$, $\pmathbf\in\real ^3$, $p_4=\sqrt{\|\pmathbf\|^2+m^2\,}$, $\bm{\sigma} =(\sigma _1,\sigma _2,\sigma _3)$ define the
four-vectors
\begin{equation*}
u(\pmathbf ,s)
=\begin{bmatrix}\ket{s}\\\noalign{\smallskip}\frac{\sigmabm\cdot\pmathbf}{p_4+m}\ket{s}\\\noalign{\smallskip}\end{bmatrix},\quad
v(\pmathbf ,s)
=\begin{bmatrix}\noalign{\smallskip}\frac{\sigmabm\cdot\pmathbf}{p_4+m}(-i\sigma _2\ket{s})\\\noalign{\smallskip}
-i\sigma _2\ket{s}\end{bmatrix}
\end{equation*}
and note that $-i\sigma _2\ket{0}=\ket{1}$ and $-i\sigma _2\ket{1}=-\ket{0}$

\begin{thm}       % Theorem 4.2
\label{thm42}
Letting $p=(p_1,p_2,p_3,p_4)=(\pmathbf ,p_4)$ where $p_4=\sqrt{\|\pmathbf\|^2+m^2\,}$, discrete plane wave solutions of \eqref{eq47} are
\begin{align*}
\psi _{p,s}^{(+)}(x_{n,j})&=u(\pmathbf ,s)w(p, x_{n,j}),\quad s=0,1\\
\intertext{and}
\psi _{-p,-s}^{(-)}(x_{n,j})&=v(\pmathbf ,s)w(-p,x_{n,j}),\quad s=0,1
\end{align*}
\end{thm}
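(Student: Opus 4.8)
The plan is to reduce the discrete Dirac equation \eqref{eq47} to a purely algebraic momentum-space equation, exactly as in the continuum theory, with Lemma~\ref{thm31} supplying the essential input. First I would note that since $u(\pmathbf ,s)$ is a constant four-vector and the $\gamma _k$ are constant matrices, the operator $i\gamma\cdot\nabla$ acts on $\psi _{p,s}^{(+)}(x_{n,j})=u(\pmathbf ,s)w(p,x_{n,j})$ only through the scalar factor $w(p,\ctimes )$. Writing $\gamma\cdot\nabla =-\gamma _1\nabla _1-\gamma _2\nabla _2-\gamma _3\nabla _3+\gamma _4\nabla _4$ and applying Lemma~\ref{thm31}, each $\nabla _k$ is replaced by $ip_k$ for $k=1,2,3$ and by $-ip_4$ for $k=4$. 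A short computation then gives $i\gamma\cdot\nabla\,\psi _{p,s}^{(+)}=(\gamma _1p_1+\gamma _2p_2+\gamma _3p_3+\gamma _4p_4)u(\pmathbf ,s)w(p,\ctimes )$, so that \eqref{eq47} collapses to the algebraic condition $(\gamma _1p_1+\gamma _2p_2+\gamma _3p_3+\gamma _4p_4-m)u(\pmathbf ,s)=0$.

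To verify this condition I would substitute the block forms $\gamma _4=\mathrm{diag}(\sigma _4,-\sigma _4)$ and $\gamma _k=\begin{bmatrix}0&-\sigma _k\\\sigma _k&0\end{bmatrix}$ ($k=1,2,3$), obtaining the $2\times 2$ block matrix $\begin{bmatrix}(p_4-m)I&-\sigmabm\cdot\pmathbf\\\sigmabm\cdot\pmathbf&-(p_4+m)I\end{bmatrix}$ applied to $u(\pmathbf ,s)$. Multiplying out the two block rows and using the Pauli identity $(\sigmabm\cdot\pmathbf )^2=\|\pmathbf\|^2I$ together with the mass-shell relation $\|\pmathbf\|^2=p_4^2-m^2=(p_4-m)(p_4+m)$, both block components cancel to zero. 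This is precisely where the definition $p_4=\sqrt{\|\pmathbf\|^2+m^2\,}$ enters; it plays the role of \eqref{eq45} in the massive case.

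For the antiparticle solutions $\psi _{-p,-s}^{(-)}(x_{n,j})=v(\pmathbf ,s)w(-p,x_{n,j})$ I would rerun the same reduction, now applying Lemma~\ref{thm31} with $-p$ in place of $p$; this reverses every sign and yields $i\gamma\cdot\nabla\,\psi _{-p,-s}^{(-)}=-(\gamma _1p_1+\gamma _2p_2+\gamma _3p_3+\gamma _4p_4)v(\pmathbf ,s)w(-p,\ctimes )$, hence the condition $(\gamma _1p_1+\gamma _2p_2+\gamma _3p_3+\gamma _4p_4+m)v(\pmathbf ,s)=0$. Writing $\ket{\bar s}=-i\sigma _2\ket{s}$, the analogous block computation—again invoking $(\sigmabm\cdot\pmathbf )^2=\|\pmathbf\|^2I$ and the mass shell—shows that $v(\pmathbf ,s)$ annihilates this matrix. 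The one place demanding genuine care, and the main obstacle, is the sign bookkeeping: one must simultaneously track the indefinite signature in $-\gamma _1\nabla _1-\gamma _2\nabla _2-\gamma _3\nabla _3+\gamma _4\nabla _4$, the exceptional sign on $\nabla _4$ in Lemma~\ref{thm31}, and the reversal induced by the argument $-p$. A single slip in any of these turns a $+m$ into a $-m$ and destroys the cancellation; everything else is routine block-matrix algebra.
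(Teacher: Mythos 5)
Your proposal is correct and takes essentially the same route as the paper: both reduce \eqref{eq47} via the eigenvalue relations of Lemma~\ref{thm31} (so $\nabla _k\mapsto ip_k$ for $k=1,2,3$ and $\nabla _4\mapsto -ip_4$, with signs reversed for $w(-p,\ctimes )$) to a block $2\times 2$ momentum-space computation that cancels on the mass shell $p_4^2=\|\pmathbf\|^2+m^2$. The only difference is organizational: the paper grinds out the $s=0$ particle case in explicit components and declares the remaining cases similar, whereas you verify the algebraic conditions $(\gamma _1p_1+\gamma _2p_2+\gamma _3p_3+\gamma _4p_4\mp m)u=0$ uniformly using $(\sigmabm\cdot\pmathbf )^2=\|\pmathbf\|^2I$, which is tidier but not a different method.
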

\begin{proof}
Defining $\nablabm =(\nabla _1,\nabla _2,\nabla _3)$ we can write the Dirac operator as
\begin{align}         % equation (4.8)
\label{eq48}
i\gamma\cdot\nabla&=-i\gamma _1\nabla _1-i\gamma _2\nabla _2-i\gamma _3\nabla _3+i\gamma _4\nabla _4\notag\\
&=i\begin{bmatrix}\nabla _4&\sigmabm\cdot\nablabm\\-\sigmabm\cdot\nablabm&-\nabla _4\end{bmatrix}
\end{align}
Consider the case $s=0$. We have that
\begin{align*}
\sigmabm\cdot\pmathbf\ket{0}&=(p_1\sigma _1+p_2\sigma _2+p_3\sigma _3)\ket{0}\\
   &=\begin{bmatrix}p_3&p_1-ip_2\\p_1+ip_2&-p_3\end{bmatrix}\ \begin{bmatrix}1\\0\end{bmatrix}
   =\begin{bmatrix}p_3\\p_1+ip_2\end{bmatrix}
\end{align*}
Hence,
\begin{align}         % equation (4.9)
\label{eq49}
\sigmabm\cdot\nablabm\frac{\sigmabm\cdot\pmathbf}{p_4+m}&\ket{0}w(p,x_{n,j})
=\frac{1}{p_4+m}\sigmabm\cdot\nablabm\begin{bmatrix}p_3\\p_1+ip_2\end{bmatrix}w(p,x_{n,j})\notag\\
&=\frac{1}{p_4+m}\begin{bmatrix}\nabla_3&\nabla _1-i\nabla _2\\\nabla_1+i\nabla_2&-\nabla _3\end{bmatrix}\ 
\begin{bmatrix}p_3\\p_1+ip_2\end{bmatrix}w(p,x_{n,j})\notag\\
&=\frac{1}{p_4+m}\begin{bmatrix}ip_3^2&(p_1+ip_2)(ip_1+p_2)\\p_3(ip_1-p_2)&-(p_1+ip_2)p_3\end{bmatrix}w(p,x_{n,j})\notag\\
&=\frac{1}{p_4+m}\begin{bmatrix}i\|\pmathbf\|^2\\0\end{bmatrix}w(p,x_{n,j})
\end{align}
By \eqref{eq48} and \eqref{eq49} we have that
\begin{align*}
i\gamma\cdot\nabla\psi _{p,0}^{(+)}(x_{n,j})
 &=i\begin{bmatrix}\nabla _4&\sigmabm\cdot\nablabm\\-\sigmabm\cdot\nablabm&-\nabla _4\end{bmatrix}u(\pmathbf ,0)w(p,x_{n,j})\\
 &=i\begin{bmatrix}\nabla _4&\sigmabm\cdot\nablabm\\-\sigmabm\cdot\nablabm&-\nabla _4\end{bmatrix}\ 
 \begin{bmatrix}\ket{0}\\\frac{\sigmabm\cdot\pmathbf}{p_4+m}\ket{0}\\\noalign{\smallskip}\end{bmatrix}w(p,x_{n,j})\\
 &=i\begin{bmatrix}\noalign{\smallskip}-ip_4\ket{0}+\frac{i}{p_4+m}\|\pmathbf\|^2\ket{0}\\\noalign{\smallskip}
{\begin{bmatrix}-ip_3\\-ip_1+p_2\end{bmatrix}}+\frac{ip_4}{p_4+m}{\begin{bmatrix}p_3\\p_1+ip_2\end{bmatrix}}
\\\noalign{\smallskip}\end{bmatrix}w(p,x_{n,j})\\
 &=i\begin{bmatrix}\noalign{\smallskip}\paren{p_4+\frac{\|\pmathbf\|^2}{p_4+m}}\ket{0}\\\noalign{\smallskip}
 \paren{1-\frac{p_4}{p_4+m}}\sigmabm\cdot\pmathbf\ket{0}\\\noalign{\smallskip}\end{bmatrix}w(p,x_{n,j})\\
 &=\begin{bmatrix}\noalign{\smallskip}
 \frac{p_4^2+mp_4-\|\pmathbf\|^2}{p_4+m}\ket{0}\\\noalign{\smallskip}\frac{m}{p_4+m}\sigmabm\cdot\pmathbf\ket{0}\\
 \noalign{\smallskip}\end{bmatrix}w(p,x_{n,j})\\
 &=m\begin{bmatrix}\ket{0}\\\frac{\sigmabm\cdot\pmathbf}{p_4+m}\ket{0}\\\noalign{\smallskip}\end{bmatrix}
 w(p,x_{n,j})=m\psi _{p,0}^{(+)}(x_{n,j})
\end{align*}
the other cases are similar.
\end{proof}

Finally, we consider a discrete analogue of Einstein's field equations. The \textit{global curvature operator} is defined by
$\rscript _{\omega ,\omega '}=\nabla _{\omega ,\omega '}-\nabla _{\omega ',\omega}$. The term global is used to distinguish this from the local curvature we defined in the microscopic picture. We then have that
\begin{align*}
\rscript _{\omega ,\omega '}f(x_{n,j},x_{n',j'})
&=\abar _\omega (x_{n,j})a _{\omega '}(x_{n',j'})\nabla _{k(\omega ,x_{n,j}),k(\omega ',x_{n,j})}f(x_{n,j},x_{n',j'})\\
&\quad -\abar _{\omega '}(x_{n,j})a_\omega (x_{n',j'})\nabla _{k(\omega ',x_{n',j'}),k(\omega ,x_{n,j})}f(x_{n,j},x_{n',j'})
\end{align*}
Define the operators $\dscript _{k,k'}$ with domains in $K$ by
\begin{align*}
\dscript _{k,k'}f(x_{n,j},x_{n',j'})
&=\abar _\omega (x_{n,j})a _{\omega '}(x_{n',j'})f(x_{n+2,4j+k-1},x_{n'+2,4j'+k'-1})\\
&\quad -\abar _{\omega '}(x_{n,j})a_\omega (x_{n',j'})f(x_{n'+2,4j'+k'-1},x_{n+2,4j+k-1})
\end{align*}
and the operators $\tscript _{k,k'}$ with domains in $K$ by
\begin{align*}
\tscript _{k,k'}f(x_{n,j},x_{n',j'})
&=\left[\abar _{\omega '}(x_{n+2,4j+k-1})a_\omega (x_{n'+2,4j'+k'-1})\right.\\
&\quad \left.-\abar _\omega (x_{n+2,4j+k-1})a_{\omega '}(x_{n'+2,4j'+k'-1})\right]f(x_{n,j},x_{n',j'})
\end{align*}
If we define $\dscript _{\omega ,\omega '}$ and $\tscript _{\omega ,\omega '}$ by
\begin{align*}
\dscript _{\omega ,\omega '}f(x_{n,j},x_{n',j'})&=\dscript _{k(\omega ,x_{n,j}),k(\omega ',x_{n',j'})}f(x_{n,j},x_{n',j'})\\
\intertext{and}
\tscript _{\omega ,\omega '}f(x_{n,j},x_{n',j'})&=\tscript _{k(\omega ,x_{n,j}),k(\omega ',x_{n',j'})}f(x_{n,j},x_{n',j'})
\end{align*}
then it is easy to check that
\begin{equation}         % equation (4.10)
\label{eq410}
\rscript _{\omega ,\omega '}=\dscript _{\omega ,\omega '}+\tscript _{\omega ,\omega '}
\end{equation}
We call \eqref{eq410} the \textit{discrete Einstein equations}. For a further discussion of these equations, we refer the reader to \cite{gud12}.

\end{document}